\newcommand{\defeq}{\vcentcolon=}
\newcommand{\bra}[1]{\left.\left\langle#1\right.\right|}
\newcommand{\ket}[1]{\left.\left|#1\right.\right\rangle}
\renewcommand{\mod}[1]{\ (\textrm{mod }#1)}
\newcommand{\identity}[1]{\mathds{1}_{#1}}
\newcommand{\SUM}{SU\hspace{-0.2em}M}
\newcommand{\reducesto}{\leq_{\textrm{P}}}
\newcommand{\equivto}{\equiv_{\textrm{P}}}
\newcommand{\kpart}{\textrm{$k$Part} }
\newcommand{\kcolor}{\textrm{$k$Color} }
\newcommand{\symplectic}[2]{\left(\hspace{-0.4em}\begin{tabular}{c|c} $\begin{matrix}#1\end{matrix}$ & $\begin{matrix}#2\end{matrix}$ \end{tabular}\hspace{-0.4em}\right)}
\newcommand{\diag}[1]{Diag_{#1}}
\DeclareDocumentCommand{\pauli}{m o}  
  {\IfValueTF{#2}{\mathcal{P}_{#1}^{#2}}{\mathcal{P}_{#1}}
  }
\DeclareDocumentCommand{\Zmod}{m o}  
  {
    \IfValueTF{#2}{\left(\mathbb{Z}/#1\mathbb{Z}\right)^{{#2}}}{\mathbb{Z}/#1\mathbb{Z}}
  }
\begin{document}

\newtheorem{conjecture}{Conjecture}



\title{Pauli Partitioning with Respect to Gate Sets}%

\author{Andrew Jena}
\email{ajjena@uwaterloo.ca}
\affiliation{%
Department of Combinatorics and Optimization, University of Waterloo
}%
\affiliation{%
Institute for Quantum Computing, University of Waterloo
}%

\author{Scott Genin}
\email{scott.genin@otilumionics.com}
\affiliation{
OTI Lumionics, Toronto, ON
}%

\author{Michele Mosca}
\email{michele.mosca@uwaterloo.ca}
\affiliation{%
Department of Combinatorics and Optimization, University of Waterloo, Waterloo, ON
}%
\affiliation{%
Institute for Quantum Computing, University of Waterloo, Waterloo, ON
}%
\affiliation{%
Perimeter Institute for Theoretical Physics, Waterloo, ON
}%

\date{\today}%

\begin{abstract}
Measuring the expectation value of Pauli operators on prepared quantum states is a fundamental task in a multitude of quantum algorithms. Simultaneously measuring sets of operators allows for fewer measurements and an overall speedup of the measurement process. We investigate the task of partitioning a random subset of Pauli operators into simultaneously-measurable parts. Using heuristics from coloring random graphs, we give an upper bound for the expected number of parts in our partition. We go on to conjecture that allowing arbitrary Clifford operators before measurement, rather than single-qubit operations, leads to a decrease in the number of parts which is linear with respect to the lengths of the operators. We give evidence to confirm this conjecture and comment on the importance of this result for a specific near-term application: speeding up the measurement process of the variational quantum eigensolver.
\end{abstract}

\keywords{pauli operators, qudits, variational quantum eigensolver, vqe, graph coloring}

\maketitle



\section{Introduction}\label{sec:1}

Our motivation for partitioning Pauli operators was, and remains, the speedup of the measurement step of the variational quantum eigensolver. However, upon recognizing that our results were generalizable, we chose to frame our results for qudit operators of prime dimension, $q$, which can be replaced by 2 for most near-term applications. We begin by establishing notation for the remainder of the paper.

\begin{definition}
We shall use the following generalization of the Pauli operators, which are often referred to as the shift and clock operators, respectively. For a prime, $q$, we define the following $q \times q$ unitary matrices:
\[
X_q = \begin{pmatrix}
0 & 0 & \dotso & 0 & 1 \\
1 & 0 & \dotso & 0 & 0 \\
0 & 1 & \dotso & 0 & 0 \\
\vdots & \vdots & \ddots & \vdots & \vdots \\
0 & 0 & \dotso & 1 & 0
\end{pmatrix};\ \ \ 
Z_q = \begin{pmatrix}
1 & 0 & \dotso & 0 \\
0 & \omega_q & \dotso & 0 \\
\vdots & \vdots & \ddots & \vdots \\
0 & 0 & \dotso & \omega_q^{q-1}
\end{pmatrix},
\]
where $\omega_q = e^{2\pi i/q}$. For $q = 2$, these are recognizable as the $2 \times 2$ Pauli matrices.
\end{definition}

\begin{definition}
Let $\pauli{q}$ denote the generalized Pauli group (ignoring phases) over $\Zmod{q}$. I.e., we define:
\[
\pauli{q} \defeq \left\{ X_q^i Z_q^j : i,j \in \Zmod{q} \right\}.
\]
Similarly, we define:
\[
\pauli{q}[n] = \left\{ \bigotimes_{i=1}^n P_i : P_i \in \pauli{q} \right\}
\]
to be the set of generalized length-$n$ Pauli operators (still ignoring phases) over $\Zmod{q}$.
\end{definition}

Since our task is to partition Pauli operators into parts within which all the operators can be simultaneously measured, it is natural to first investigate what this means. If we are able to simultaneously diagonalize the operators in a part, then we may perform measurements in the computational basis and simultaneously measure all the operators. For a chosen gate set, we may partition our operators into \emph{diagonalizable} parts, i.e., parts in which all operators are simultaneously-diagonalizable by an element of the gate set.

\begin{definition}
Given a gate set, $GS$, we shall denote by $\diag{GS}$ the set of sets of operators which are simultaneously-diagonalizable by some element of $GS$.
\end{definition}

In Section~\ref{sec:2}, we tackle the problem of partitioning a set of arbitrary-length Pauli operators, proving that this problem is equivalent to graph coloring, and is therefore NP-hard in general. In Section~\ref{sec:3}, we use this equivalence to compare the expected number of parts in partitions for two gate sets: arbitrary Clifford operators versus single-qudit Clifford operators. We conjecture that, when arbitrary Clifford operators are allowed, the number of parts is decreased by a factor linear with respect to the length of the Pauli operators. In Section~\ref{sec:4}, we discuss the variational quantum eigensolver, our motivation for this research, and give some evidence to support our conjecture. In Section~\ref{sec:5}, we conclude by summarizing our results and discussing some open problems for further research.

\section{Partitioning Sets of Pauli Operators}\label{sec:2}

Given a set of Pauli operators and a given gate set, how many parts might we expect to be optimal, and how might we go about finding such an optimal partition? Letting $\pauli{q}[*]$ denote the set of Pauli operators of dimension $q$ and arbitrary length, we shall formalize this problem and show its equivalence to graph coloring. We begin by defining some notation for our reductions and the relevant problems.

\begin{definition}
A decision problem, $A$, polytime reduces to another, $B$, ($A \reducesto B$) if there exists a polytime algorithm which solves $A$ given an oracle for solving $B$.
\end{definition}

\begin{definition}
$A$ is polytime equivalent to $B$ ($A \equivto B$) if $A \reducesto B$ and $B \reducesto A$.
\end{definition}

\begin{definition}
A partition of $\mathcal{S} \subseteq \pauli{q}[*]$ into $k$ diagonalizable parts is a \textbf{$\bm k$-partition} of $\mathcal{S}$.
\end{definition}

\begin{definition}
We define the $k$-partitioning problem (\kpart\hspace{-0.4em}) as follows:
\begin{addmargin}[2.2em]{0em}
\begin{itemize}
\item[Given:] $\mathcal{S} \subseteq \pauli{q}[*]$ and $k \in \mathbb{Z}_{\geq 1}$
\item[Question:] does there exists a $k$-partition of $\mathcal{S}$?
\end{itemize}
\end{addmargin}
\end{definition}

\begin{definition}
A \textbf{$\bm k$-coloring} of a simple, undirected graph, $G$, is a partition of the vertices of $G$ into $k$ co-cliques.
\end{definition}

\begin{definition}
We define the $k$-coloring (\kcolor\hspace{-0.4em}) problem as follows:

\begin{addmargin}[2.2em]{0em}
\begin{itemize}
\item[Given:] a simple, undirected graph, $G$, and $k \in \mathbb{Z}_{\geq 1}$
\item[Question:] does there exists a $k$-partition of $\mathcal{S}$?
\end{itemize}
\end{addmargin}
\end{definition}

\begin{proposition}
$\kpart \equivto \kcolor$
\end{proposition}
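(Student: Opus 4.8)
The plan is to give polytime reductions in both directions, resting on the structural fact that diagonalizability is a \emph{pairwise} property of Pauli operators. Concretely, I would first argue that a set $\mathcal{T} \subseteq \pauli{q}[*]$ lies in $\diag{GS}$ (for the arbitrary-Clifford gate set) if and only if every pair $\{P,P'\} \subseteq \mathcal{T}$ commutes. One inclusion is immediate: any subset of a diagonalizable set is diagonalizable, and if a single unitary sends every element of $\mathcal{T}$ to a diagonal matrix then those images — and hence the operators themselves — pairwise commute. The converse, that pairwise commutation forces \emph{joint} Clifford-diagonalizability, is the substantive point, and I expect it to be the main obstacle. I would establish it by noting that a pairwise-commuting set generates an abelian subgroup of $\pauli{q}[n]$, which corresponds to an isotropic subspace of the symplectic space $\Zmod{q}[2n]$; any isotropic subspace can be carried by a symplectic (hence Clifford) transformation into the span of the diagonal $Z$-type generators, simultaneously diagonalizing the whole set. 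This is where the algebra of the Pauli group, rather than mere commutation, is genuinely used.

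With this in hand, the forward reduction $\kpart \reducesto \kcolor$ is direct. Given an instance $(\mathcal{S},k)$, I would build the simple graph $G$ whose vertices are the operators of $\mathcal{S}$ and whose edges join pairs that fail to commute; each adjacency is decided in polynomial time by evaluating a symplectic inner product (padding shorter operators with identities to a common length). By the structural fact, the diagonalizable parts of $\mathcal{S}$ are exactly the co-cliques of $G$, so a $k$-partition of $\mathcal{S}$ and a $k$-coloring of $G$ are literally the same object. Passing $G$ and $k$ to a \kcolor oracle therefore decides the \kpart instance.

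For the reverse reduction $\kcolor \reducesto \kpart$, the task is to realize an arbitrary graph as the non-commutation graph of efficiently constructible operators. Given $G$ on vertex set $\{1,\dots,m\}$ with adjacency matrix $A$ over $\Zmod{q}$, I would let $B$ be the strictly upper-triangular part of $A$ and assign to vertex $i$ the length-$m$ operator with symplectic vector $(x_i\,|\,z_i)$, where $x_i = e_i$ is the $i$-th standard basis vector and $z_i$ is the $i$-th row of $B$. A one-line computation gives $\langle (x_i\,|\,z_i),(x_j\,|\,z_j)\rangle = z_i\cdot x_j - x_i\cdot z_j = B_{ij}-B_{ji} = \pm A_{ij}$, so operators $i$ and $j$ commute exactly when $i,j$ are non-adjacent, and distinctness of the $e_i$ guarantees distinct operators. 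The construction is polytime, and by the same correspondence the resulting set $\mathcal{S}$ admits a $k$-partition if and only if $G$ admits a $k$-coloring, so a \kpart oracle decides \kcolor.

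Combining the two reductions yields $\kpart \equivto \kcolor$. The reverse realization is the clean part — the identity $B - B^{\mathsf{T}} = \pm A$ on the alternating symplectic form makes it painless to encode any zero-diagonal symmetric adjacency matrix — while the joint-diagonalizability lemma underlying the forward direction is the ingredient I expect to require the most care, and is where I would invest the bulk of the argument (or cite the standard isotropic-subspace normal form if it is available earlier).
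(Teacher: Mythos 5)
Your proposal is correct and takes essentially the same route as the paper's proof: your forward reduction builds the identical non-commutation (non-diagonalizable) graph, and your reverse reduction encodes the adjacency matrix exactly as the paper does, assigning each vertex an operator with standard-basis $X$-part and a triangular half of $A_G$ as the $Z$-part so that the symplectic inner product reproduces adjacency. The structural lemma you isolate --- pairwise commutation implies joint Clifford-diagonalizability --- is the same ingredient the paper supplies via the explicit inductive construction in Appendix~\ref{app:A}, so investing your effort there is consistent with how the paper organizes the argument.
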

\begin{proof}
We shall prove the reduction in both directions.
\begin{itemize}
\item $\kpart \reducesto \kcolor$: let $O^C$ be an oracle for \kcolor.

Given a set, $\mathcal{S} \subseteq \pauli{q}[*]$, and an integer, $k \in \mathbb{Z}_{\geq 1}$, we construct a graph, $G = (V,E)$, (we'll call this the \textbf{non-diagonalizable graph}) by letting:
\begin{itemize}
\item $V = \mathcal{S}$
\item $E = \left\{ P_iP_j : \begin{array}{l}
P_i, P_j \in \mathcal{S} \textrm{ and }\\
\{P_i,P_j\} \not\in \diag{GS}
\end{array} \right\}.$
\end{itemize}

We observe that a $k$-coloring of $G$ is a partition of its vertices into co-cliques. In other words, each part is a diagonalizable set of Pauli operators. Thus, a $k$-partition of $\mathcal{S}$ exists if and only if a $k$-coloring of $G$ exists.

Therefore, the output of $O^C(G,k)$ is a solution to $\kpart$.

\item $\kcolor \reducesto \kpart$: let $O^P$ be an oracle for \kpart.

Given a graph, $G = (V,E)$, and an integer, $k \in \mathbb{Z}_{\geq 1}$, we first define the adjacency matrix of $G$ to be the matrix, $A_G$, such that:
\[
A_G(u,v) = \begin{cases}
1 & uv \in E \\
0 & uv \not\in E
\end{cases}
\]
We then construct a set $\mathcal{S} \subseteq \pauli{q}[*]$ by letting:
\begin{itemize}
\item $\mathcal{S} = \symplectic{\identity{n}}{A_G^{LT}}$,
\end{itemize}
where $A_G^{LT}$ is the lower triangular portion of the adjacency matrix of $G$. Indexing our set of Pauli operators by the vertices of $G$, we observe that $\{P_u, P_v\} \in \diag{GS}$ if and only if $uv \not\in G$.

We observe that a $k$-partition of $\mathcal{S}$ is a partition of the Pauli operators into diagonalizable parts. In other words, each part is a co-clique of the vertices of $G$. Thus, a $k$-coloring of $G$ exists if and only if a $k$-partition of $\mathcal{S}$ exists.

Therefore, the output of $O^P(\mathcal{S},k)$ is a solution to \kcolor.
\end{itemize}
\end{proof}

It has long been known that \kcolor (for $k \geq 3$) is NP-complete with respect to $|V(G)|$. Since the number of vertices in $G$ is the same as the number of Pauli operators in our set, $\mathcal{S}$, in both directions of our above equivalence, this immediately gives us the following result.

\begin{theorem}
\kpart (for $k \geq 3$) is NP-complete with respect to $|\mathcal{S}|$.
\end{theorem}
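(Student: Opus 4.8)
The plan is to discharge the two obligations of NP-completeness separately, membership in NP and NP-hardness, and to lean on the equivalence just proven for the hardness half while supplying a short independent argument for membership. The phrase ``with respect to $|\mathcal{S}|$'' signals that the complexity parameter is the number of operators, so I would keep track of that quantity throughout.

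First I would argue that \kpart belongs to NP. A certificate is simply an assignment of each operator in $\mathcal{S}$ to one of $k$ labels, and to verify it we check that each of the $k$ labelled parts lies in $\diag{GS}$. The only real content here is that membership in $\diag{GS}$ is decidable in time polynomial in $|\mathcal{S}|$ and the operator length: for the gate sets under consideration (e.g. the Clifford group) a set of Pauli operators is simultaneously diagonalizable precisely when its elements pairwise commute, and commutativity of two operators reduces to the vanishing of a single symplectic inner product over $\GF{q}$. Checking all $O(|\mathcal{S}|^2)$ pairs is therefore polynomial, so the verifier runs in polynomial time and \kpart is in NP.

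Next I would establish NP-hardness. It is a classical result (Karp) that \kcolor is NP-complete for each fixed $k \geq 3$. The Proposition already gives $\kcolor \reducesto \kpart$, and I would emphasize that the reduction exhibited there is in fact many-one: on input $(G,k)$ it constructs the single instance $(\mathcal{S},k)$ with $\mathcal{S} = \symplectic{\identity{n}}{A_G^{LT}}$ and returns the oracle's answer verbatim, which is exactly the form needed to transfer NP-completeness. Crucially, the construction is polynomial and maps a graph on $n = |V(G)|$ vertices to a set with exactly $|\mathcal{S}| = n$ operators, each of length $n$; hence the whole instance is of size polynomial in $|\mathcal{S}|$, and the hardness is correctly measured with respect to $|\mathcal{S}|$. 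Composing this reduction with the NP-hardness of \kcolor shows that \kpart is NP-hard for every $k \geq 3$, and combining this with the membership argument yields the theorem.

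I expect the genuine obstacle to be the membership half rather than the hardness half. The hardness is essentially immediate from the Proposition, but NP-completeness also demands an efficient verifier, and this hinges on the characterization of $\diag{GS}$, namely that simultaneous diagonalizability collapses to a linear-algebraic condition over $\GF{q}$ that can be tested efficiently. Once that characterization is pinned down, the verification is routine and the ``with respect to $|\mathcal{S}|$'' bookkeeping closes the argument.
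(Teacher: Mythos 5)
Your proposal is correct, and it is more careful than the paper's own argument, which is a one-liner: the paper simply observes that \kcolor (for $k \geq 3$) is NP-complete, that the equivalence $\kpart \equivto \kcolor$ was just established, and that in both directions of that equivalence the number of vertices equals the number of operators, so the result follows ``immediately.'' Where you differ is in the membership half. The paper implicitly obtains $\kpart \in \textrm{NP}$ from the direction $\kpart \reducesto \kcolor$; this is legitimate only because the reduction constructed in the Proposition is in fact many-one (build the non-diagonalizable graph, ask once, return the answer), whereas the paper's official definition of $\reducesto$ is a Turing (oracle) reduction, and Turing reducibility to an NP problem does not in general place a problem in NP. Your direct verifier --- a $k$-labelling certificate checked by testing each part for membership in $\diag{GS}$ via pairwise symplectic inner products --- sidesteps this subtlety entirely, at the cost of needing the characterization that $\diag{\mathcal{C}}$ is exactly the pairwise-commuting sets, which the paper supplies in Appendix~\ref{app:A}. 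For the hardness half you and the paper do the same thing (compose the Proposition's $\kcolor \reducesto \kpart$ construction with Karp's theorem), and your bookkeeping that a graph on $n$ vertices maps to $|\mathcal{S}| = n$ operators of length $n$, so the instance size is polynomial in $|\mathcal{S}|$, is precisely the point the paper makes when it says the vertex count and operator count coincide. In short: same skeleton, but your explicit treatment of NP membership fills a gap the paper glosses over.
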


Using a similar reduction to above, we may show that the following problem is NP-hard.

\begin{definition}
Given $\mathcal{S} \subseteq \pauli{q}[*]$, the \textbf{Pauli partitioning problem} is to return a partition of $\mathcal{S}$ into the fewest number of diagonalizable parts.
\end{definition}

In fact, this problem is exactly equivalent to coloring the corresponding non-diagonalizable graph with the fewest colors. As such, we can say a few things about the expected number of parts in a partition of a randomly chosen set of Paulis from $\pauli{q}[*]$.

\section{Comparing Partition Sizes for Certain Gate Sets}\label{sec:3}

Choosing a gate set for a given application of these partitioning techniques has a lot of factors. The first and foremost is the capabilities of the device being used. If certain gates are not reliable (for instance, entanglement gates), more-restrictive gates sets might be necessary for obtaining useful outputs. Another consideration, however, should be the expected number of parts in a partition.

Below, we compare the expected number of parts in a partition with respect to two different gate sets, the generalized (for qudits) Clifford group, $\mathcal{C}$, and the set of all single-qudit Clifford operators, $sq\mathcal{C}$.

Beginning with the generalized Clifford group, we first want to investigate which sets are in $\diag{\mathcal{C}}$. It is well known that any set of commuting operators is simultaneously-diagonalizable by a unitary \cite{Diag}. In Appendix~\ref{app:A}, we give an explicit construction to show that any set of commuting Pauli operators is simultaneously-diagonalizable by a Clifford operator. This allows us to give some estimates for the expected number of parts in our partitions.

\begin{proposition}\label{prop:chromatic}
For almost all sets, $\mathcal{S} \subseteq \pauli{q}[*]$, such that all the operators in $\mathcal{S}$ are linearly independent, the number of parts in a minimal partition of $\mathcal{S}$ with respect to $\mathcal{C}$ is:
\[
\left( \frac{1}{2} + o(1) \right) \left( \log_2(q) + o(1) \right) \frac{|\mathcal{S}|}{\log_2(|\mathcal{S}|)}
\]
\end{proposition}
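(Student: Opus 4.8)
The plan is to recognize the minimal partition number as a chromatic number and then apply the known asymptotics for coloring dense random graphs. By the equivalence with graph coloring established above, together with the fact (proved in Appendix~\ref{app:A}) that a set of Pauli operators lies in $\diag{\mathcal{C}}$ exactly when its elements pairwise commute, a minimal partition of $\mathcal{S}$ with respect to $\mathcal{C}$ has size equal to the chromatic number $\chi(G)$ of the non-diagonalizable graph $G$: the vertices are the operators of $\mathcal{S}$, and an edge joins each pair that fails to commute.

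First I would move to the symplectic representation, identifying $X_q^{\vec a}Z_q^{\vec b} \in \pauli{q}[n]$ with $(\vec a, \vec b) \in \Zmod{q}[2n]$, under which two operators commute precisely when their symplectic inner product $\langle\cdot,\cdot\rangle$ vanishes. A uniformly random linearly independent set $\mathcal{S}$ of size $N = |\mathcal{S}|$ then corresponds to linearly independent vectors $\vec v_1,\dots,\vec v_N$, with an edge $ij$ present iff $\langle \vec v_i, \vec v_j\rangle \neq 0$. The central claim is that these edge indicators are asymptotically independent. Revealing the vectors one at a time, nondegeneracy of $\langle\cdot,\cdot\rangle$ together with the independence of $\vec v_1,\dots,\vec v_{i-1}$ makes the map $\vec w \mapsto (\langle \vec v_1,\vec w\rangle,\dots,\langle \vec v_{i-1},\vec w\rangle)$ surjective onto $\Zmod{q}[i-1]$, so the inner products contributed by $\vec v_i$ are uniform on $\Zmod{q}$ and independent of everything revealed so far; the only correction comes from conditioning each $\vec v_i$ to avoid the span of its predecessors, which is negligible in total variation once $n$ is taken a little larger than $N/2$ (permissible since $\mathcal{S}\subseteq\pauli{q}[*]$ allows arbitrary length, while linear independence already forces $N \le 2n$).

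With independence in hand, $G$ is distributed as the Erd\H{o}s--R\'enyi graph $G(N,p)$, where $p = \Pr[\langle \vec v_i, \vec v_j\rangle \neq 0] = 1 - 1/q$ since a uniform inner product hits $0$ with probability $1/q$. I would then invoke the classical theorem of Bollob\'as that, almost surely, $\chi(G(N,p)) = (1+o(1))\,\frac{N}{2\log_b N}$ with $b = (1-p)^{-1}$. Substituting $b = q$, so that $\log_b N = \log_2 N / \log_2 q$, and rearranging gives $\left(\tfrac12 + o(1)\right)\left(\log_2 q + o(1)\right)\frac{N}{\log_2 N}$, as claimed.

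The hard part will be certifying this random-graph structure rigorously rather than heuristically: the pairwise symplectic products look dependent a priori, and the argument above must carefully bound the total-variation cost of the linear-independence conditioning and pin down the growth of $n$ against $N$ so that $p$ remains $1 - 1/q + o(1)$ throughout (this is what the $\log_2 q + o(1)$ factor absorbs). If full independence proves awkward and only pairwise independence or a quasirandomness estimate survives, the fallback is a concentration (martingale) version of the chromatic-number asymptotics valid for sufficiently pseudorandom graphs, and that adaptation is where the real effort would concentrate.
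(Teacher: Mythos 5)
Your proposal is correct and follows essentially the same route as the paper: identify the minimal partition size with the chromatic number of the non-commutation graph, use linear independence together with the nondegenerate symplectic form to argue that the edge indicators are independent (so the graph is Erd\H{o}s--R\'enyi with edge probability $1-1/q+o(1)$), and then apply Bollob\'as's asymptotics for $\chi(G(N,p))$. The only cosmetic difference is that the paper certifies independence by Clifford-conjugating $\mathcal{S}$ into a canonical symplectic form, so that commutation with $P_i$ is read off the $X$-power on the $i$\textsuperscript{th} qudit --- a constructive rendering of your surjectivity argument for the map $\vec w \mapsto \left(\langle \vec v_j, \vec w\rangle\right)_j$.
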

\begin{proof}
Let $m$ be the largest length of any Pauli operator in $\mathcal{S}$.

For any two randomly-chosen Pauli operators, $P_i, P_j \in \pauli{q}[m] \setminus \identity{q^m}$, the probability that $\{P_i,P_j\} \in \diag{\mathcal{C}}$ is $(q^{2m-1}-2)/(q^{2m}-1)$. If we choose a single non-identity qudit of $P_i$, then $P_j$ may take any values on the remaining $m-1$ qubits (i.e. $q^{2m-2}$ possibilities), but must take any one of $q$ values on the last qudit to make the commutator 0 (i.e. $q^{2m-1}$ possibilities), and we subtract 2 for the identity operator and for $P_i$, itself.

Choosing an ordering on our set, $\mathcal{S} = \{P_1,\dotso, P_{|S|}\}$, we define the following matrix:
\[
C(i,j) = P_i \odot P_j,
\]
where $\odot$ is the symplectic inner product (which is analogous to the commutator, but with outputs in $\Zmod{q}$). We observe that we may always find a Clifford group operation which acts by conjugation on our set transforming it into:
\[
\symplectic{\identity{|\mathcal{S}|}\ 0_{n-|\mathcal{S}|}}{C^{LT}\ 0_{n-|\mathcal{S}|}}.
\]
This is because the commutator between any pair of operators is preserved when we conjugate by a Clifford operator. Thus, for any operator outside of $\mathcal{S}$, whether it commutes with $P_i$ is determined solely by the power of the $X$ term on its $i$\textsuperscript{th} qubit. Since the powers on each qudit are chosen independently, the probabilities are independent.

This implies that the non-diagonalizable graph is a random graph on $|\mathcal{S}|$ vertices with edge-probability $1-(q^{2m-1}-2)/(q^{2m}-1)$. Using the result from \cite{GraphColor}, the minimum number of colors necessary to color $G$ is expected to be:
\begin{align*}
\left( \frac{1}{2} + o(1) \right) &\log_2 \left( \frac{1}{1-\left( 1 - \frac{q^{2m-1}-2}{q^{2m}-1} \right)} \right) \frac{|\mathcal{S}|}{\log_2(|\mathcal{S}|)} \\
= \left( \frac{1}{2} + o(1) \right) &\log_2 \left( \frac{q^{2m}-1}{q^{2m-1}-2} \right) \frac{|\mathcal{S}|}{\log_2(|\mathcal{S}|)} \\
= \left( \frac{1}{2} + o(1) \right) &\left( \log_2(q) + o(1) \right) \frac{|\mathcal{S}|}{\log_2(|\mathcal{S}|)}
\end{align*}
\end{proof}

In the last proposition, we have clearly required the assumption that the operators in our set were linearly independent. However, we conjecture that the value found above remains an upper bound for an arbitrary set, $S$.

\begin{conjecture}\label{conj:chromatic}
For almost all sets, $\mathcal{S} \subseteq \pauli{q}[*]$, the number of parts in a minimal partition of $\mathcal{S}$ with respect to $\mathcal{C}$ is bounded above by:
\[
\left( \frac{1}{2} + o(1) \right) \left( \log_2(q) + o(1) \right) \frac{|\mathcal{S}|}{\log_2(|\mathcal{S}|)}
\]
\end{conjecture}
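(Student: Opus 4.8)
The plan is to reduce the general statement to the linearly independent case already settled in Proposition~\ref{prop:chromatic} by quotienting out the subspace on which the symplectic form degenerates, and then to control the correlations between edges that survive. Write $W = \mathrm{span}(\mathcal S) \subseteq \Zmod{q}[2m]$ and let $R = W \cap W^{\perp}$ be the radical of the symplectic form restricted to $W$. Fix a complement $U$ with $W = R \oplus U$, so $U$ is a non-degenerate symplectic space of some dimension $2r$, and let $\mu$ denote the projection onto $U$. Since $P \odot P' = \mu(P) \odot \mu(P')$, commutation of two operators depends only on their images in $U$; in particular every $P \in R$ commutes with all of $\mathcal S$ and is an isolated vertex of the non-diagonalizable graph.

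First I would record the resulting blow-up structure. Grouping the operators by their image $\mu(P)$, the non-diagonalizable graph is obtained from the commutation graph $H$ on the \emph{distinct} nonzero images $u_1,\dots,u_t \in U$ by replacing each $u_i$ with a co-clique (twins commute, as $u_i \odot u_i = 0$) and adjoining the isolated vertices coming from $R$. Because a co-clique blow-up leaves the chromatic number unchanged (color each blob with the color of its image, and merge non-adjacent blobs freely) and isolated vertices can be absorbed into any existing color class, the number of parts in a minimal partition of $\mathcal S$ equals $\chi(H)$, with $t \le |\mathcal S|$. I would then split on the size of $t$: if $t$ is already at most the claimed bound, then $\chi(H) \le t$ suffices; otherwise $t$ is large, $\log_2 t \sim \log_2|\mathcal S|$, and the target for $t$ vertices is at most the target for $|\mathcal S|$ by monotonicity of $x/\log_2 x$, so it is enough to prove that $\chi(H)$ obeys the estimate of Proposition~\ref{prop:chromatic} for its own $t$ vertices. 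Here the key enabling fact is that the edge event $u_i \odot u_j \neq 0$ still occurs with probability $1 - 1/q + o(1)$, which is exactly what produces the constant $\log_2 q$.

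The \textbf{main obstacle} is that $H$ is not a binomial random graph: its edges are the values of a fixed non-degenerate bilinear form evaluated on the chosen vectors, so they are correlated, and the device used in Proposition~\ref{prop:chromatic} — placing linearly independent operators in the standard symplectic form to make edges independent — no longer applies once $t > 2r$. I would treat the two extremes cleanly. When $2r$ is large relative to $\log_2|\mathcal S|$, any $O(\log_2|\mathcal S|)$ of the images are linearly independent with high probability, and on a linearly independent tuple the pairwise symplectic products are close to jointly uniform over $\Zmod{q}$; this $k$-wise near-independence for $k = O(\log_2|\mathcal S|)$ is what the second-moment estimate for the independence number and the martingale concentration underlying the coloring bound actually require, so $\chi(H)$ matches the value it would have for the binomial graph $G(t,1-1/q)$. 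When instead $r$ is small, so that $q^{r} \lesssim |\mathcal S|/\log_2|\mathcal S|$, I would abandon probability and invoke symplectic geometry: $U\setminus\{0\}$ admits a spread into $q^{r}+1$ Lagrangian subspaces, each a commuting set, giving $\chi(H) \le q^{r}+1$, comfortably below the stated bound.

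The genuinely delicate point, and where I expect the real work to lie, is the intermediate regime $r \asymp \log_q|\mathcal S|$, where neither clean argument applies directly: the linear-independence room needed to run the second moment over pairs of near-extremal commuting sets (involving up to $\sim 4\log_q t$ images) is not quite available, while the crude spread bound $q^{r}+1$ is no longer obviously dominated by the target. Covering this window requires interpolating the two arguments — for instance bounding the higher-order correlations of the symplectic form on mildly dependent tuples via character-sum or orbit-counting estimates for the action of the symplectic group on short linearly independent tuples — and doing so uniformly in the coupled limit $m,|\mathcal S| \to \infty$. Establishing these correlation bounds and gluing the regimes is the step I would expect to consume most of the effort.
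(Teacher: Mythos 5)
Your proposal is not, and does not claim to be, a complete proof --- but neither is the paper's. The statement is a conjecture there, and the paper's ``proof'' is explicitly just evidence, split by set size: for $|\mathcal{S}| \leq 2m$ it lower-bounds the probability of linear independence by the $q$-Pochhammer product $(q^{-2m};q)_{|\mathcal{S}|}$ (bounded below by $\approx 0.289$ for $q=2$ --- notably only ``probability greater than $1/4$,'' not ``almost all''), so that Proposition~\ref{prop:chromatic} applies; for $|\mathcal{S}| > 2m$ it argues only heuristically that linear dependence helps, because a set in $\diag{\mathcal{C}}$ has its span in $\diag{\mathcal{C}}$, so dependence produces larger co-cliques and should reduce the number of colors. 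Your route is genuinely different and, where it is rigorous, strictly sharper: quotienting by the radical $R$ of the symplectic form restricted to $\spn(\mathcal{S})$ and observing that the non-diagonalizable graph is a co-clique blow-up of the commutation graph $H$ on the distinct nonzero images (so that the minimal number of parts equals $\chi(H)$) is a precise version of the paper's vague ``dependence creates co-cliques'' remark, and your spread bound $\chi(H) \leq q^{r}+1$ for small $r$ makes contact with the $q^n+1$ mutually unbiased bases the paper invokes in its conclusion.

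That said, two concrete gaps keep your plan from being a proof. First, as you acknowledge, the intermediate regime $r \asymp \log_q|\mathcal{S}|$ is uncovered: there $q^{r}+1$ can exceed the target $\left(\frac{1}{2}+o(1)\right)\log_2(q)\,|\mathcal{S}|/\log_2(|\mathcal{S}|)$, while the images are too linearly dependent for the random-graph machinery. Second, even your large-$r$ extreme rests on an unproven transfer principle: that $k$-wise near-uniformity of the symplectic products for $k = O(\log_2|\mathcal{S}|)$ suffices to recover the Bollob\'as-type asymptotics for the chromatic number. The known proofs of that asymptotic combine second-moment estimates on independent sets with martingale concentration over the full edge-exposure filtration; whether limited independence of this strength reproduces the \emph{upper} bound is a substantive question, not a routine check, and a correlated edge structure induced by a fixed bilinear form is precisely the setting where such transfers can fail. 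So, executed as written, your argument would settle the two extreme regimes --- already more than the paper establishes --- but would still leave the conjecture open in the window you yourself identify as the hard one.
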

\begin{proof}
First, we give evidence for why we might expect our assumption about linear independence to be satisfied for a randomly-chosen set.

The probability that there are no linearly dependent subsets of a set, $\mathcal{S}$, of length-$m$, $q$-ary Pauli operators is:
\[
\prod_{i=0}^{|\mathcal{S}|-1} \left( 1-q^{i-2m} \right) = (q^{-2m};q)_{|\mathcal{S}|},
\]
where $(a;q)_k$ is the $q$-Pochhammer symbol.

For values of $|\mathcal{S}| \leq 2m$, this probability is bounded below by a fixed value which depends on $q$. For $q=2$, it is bounded by $\approx 0.288788$ and for larger values of $q$, this value tends towards $1$.

In other words, for any value of $q$, this size requirement is enough to ensure our graph is random with probability greater than $1/4$.

However, when $|\mathcal{S}| > 2m$, we must rely on a different argument which gives a heuristic for why the number of partitions should actually be lower than if there were not linear dependence.

In instances where we have linear dependence in our Pauli operators, we must observe that linearly dependent sets lead to larger co-cliques in our graph. If some set of Pauli operators, $S$, is in $\diag{\mathcal{C}}$, then the span of $S$ (i.e. the set of operators which can be written as a product over the operators in $S$) is also in $\diag{\mathcal{C}}$. This is because any operator which commutes with a set of operators will still commute with a product of operators from that set. Since we would expect to have larger co-cliques than in a random graph, our graph should admit a coloring with fewer colors.
\end{proof}

Having established these expectations for the generalized Clifford group, we can now compare these results with what we might expect from the set of single-qudit Clifford operators, $sq\mathcal{C}$.

Given the gate set, $sq\mathcal{C}$, we observe that any two operators which do not commute on a given qudit  are not simultaneously-diagonalizable. This is because conjugation by a Clifford operator maintains commutation relations, and without entanglement gates, we are unable to make these operators commute on the given bit. On the other hand, if two operators commute on a given bit, we may diagonalize that qudit by some single-qudit Clifford operator. This allows us to simultaneously-diagonalize any quditwise-commuting Pauli operators using an element of $sq\mathcal{C}$.

\begin{conjecture}
Given $\mathcal{S} \subseteq \pauli{q}[*]$, we expect the number of parts in a minimal partition of $\mathcal{S}$ with respect to $sq\mathcal{C}$ to be bounded below by:
\[
\left( \frac{1}{2} + o(1) \right) m(\log_2(q) - o(1)) \frac{|\mathcal{S}|}{\log_2(|\mathcal{S}|)},
\]
where $m$ is the length of the largest Pauli operator in $\mathcal{S}$.
\end{conjecture}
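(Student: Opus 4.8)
The plan is to mirror Proposition~\ref{prop:chromatic}, replacing ordinary commutation by quditwise commutation throughout. As observed just above, two Pauli operators share a part with respect to $sq\mathcal{C}$ exactly when they commute on every qudit, so the non-diagonalizable graph $G$ on vertex set $\mathcal{S}$ joins $P_i$ to $P_j$ precisely when the two operators fail to commute on at least one qudit. First I would compute the edge probability of $G$ for a uniformly random $\mathcal{S}\subseteq\pauli{q}[m]$, and then feed it into the random-graph coloring estimate of \cite{GraphColor} exactly as before.

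For the edge probability, I would first count single-qudit commutations: writing each qudit component as a vector in $\Zmod{q}[2]$, two components commute iff their symplectic product $\odot$ vanishes, i.e.\ iff the two vectors are linearly dependent over $\GF{q}$. Counting dependent ordered pairs gives a per-qudit commutation probability $c=(q^2+q-1)/q^3=\tfrac1q+O(\tfrac1{q^2})$. Because the $m$ qudit components are drawn independently, the probability that $P_i$ and $P_j$ commute on all $m$ qudits is $c^{\,m}$, so the edge probability is $1-c^{\,m}$. Substituting into the estimate of \cite{GraphColor} yields
\[
\Bigl(\tfrac12+o(1)\Bigr)\log_2\!\Bigl(\tfrac{1}{c^{\,m}}\Bigr)\frac{|\mathcal{S}|}{\log_2|\mathcal{S}|}
=\Bigl(\tfrac12+o(1)\Bigr)\,m\bigl(\log_2 q-o(1)\bigr)\frac{|\mathcal{S}|}{\log_2|\mathcal{S}|},
\]
using $\log_2(1/c)=\log_2 q-\log_2(qc)=\log_2 q-o(1)$ since $qc=1+O(1/q)$. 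This recovers the claimed expression and pinpoints both the factor $m$ (the $m$ independent per-qudit constraints) and the $-o(1)$ (the per-qudit probability sitting just above $1/q$).

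The main obstacle, and the reason the statement is a conjecture and only a lower bound, is that the edges of $G$ are no longer independent, so \cite{GraphColor} cannot be invoked directly. In Proposition~\ref{prop:chromatic} independence was manufactured by conjugating $\mathcal{S}$ into the symplectic standard form of that proposition, after which commutation with a fixed $P_i$ depended on a single, independently chosen coordinate. No such reduction survives here: quditwise commutation is governed by $m$ separate symplectic conditions, and the ``common line'' structure correlates them strongly, since every quditwise-commuting set is confined to one of the $(q+1)^m$ product maximal-abelian subgroups $L_1\otimes\cdots\otimes L_m$. This clustering inflates co-clique sizes above the random-graph prediction, so I would not try to reproduce the coloring asymptotics as an equality.

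Instead, to secure the lower bound I would argue that in the application-relevant regime the graph $G$ is nearly complete. Since the per-pair quditwise-commutation probability is only $c^{\,m}\approx q^{-m}$, the complement of $G$ has expected size $\approx\binom{|\mathcal{S}|}{2}q^{-m}$, and bounding $\chi(G)$ below by $|\mathcal{S}|$ minus the matching number of this sparse complement would give $\chi(G)\gtrsim|\mathcal{S}|\bigl(1-O(|\mathcal{S}|/q^m)\bigr)$, which dominates the claimed expression once $|\mathcal{S}|\gtrsim q^{m/2}$. The hardest point I anticipate is delimiting exactly the range of $|\mathcal{S}|$ and $m$ for which the estimate is simultaneously valid (it must not exceed the ceiling $\chi\le|\mathcal{S}|$, which fails for $|\mathcal{S}|$ too small) and nontrivial (for $|\mathcal{S}|\gg q^m$ the $(q+1)^m$ product subgroups collapse $\chi$ far below the formula); controlling the correlated co-clique structure across this window is the crux of turning the heuristic into a theorem.
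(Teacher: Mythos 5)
Your proposal follows essentially the same route as the paper: the same per-qudit commutation probability $(q^3+q^2-q)/q^4=(q^2+q-1)/q^3$, the same use of independence across the $m$ qudits to obtain edge probability $1-c^m$, the same substitution into the random-graph coloring estimate of \cite{GraphColor} yielding the factor $m(\log_2(q)-o(1))$, and the same concluding caveat that the edges of the non-diagonalizable graph are not independent, which is precisely why the statement remains a conjecture rather than a theorem. Your supplementary heuristics (the $(q+1)^m$ product-subgroup clustering and the matching-number bound in the near-complete regime) go beyond what the paper offers, but the core reasoning is identical.
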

\begin{proof}
Letting $m$ be the maximum length of any of the chosen Paulis, we first look at the probability that any two operators, $P_i$ and $P_j$, chosen from $\pauli{q}[m]$ will quditwise commute. This requires that each pair of qudits commutes, which happens with probability $(q^3+q^2-q)/q^4$. Thus, the probability that these operators quditwise commute is $((q^3+q^2-q)^m-q^{2m})/(q^4)^m$.

The non-diagonalizable graph will therefore have edge probability $1-(q^3+q^2-q)^m/(q^4)^m$, although the edge probabilities will likely not be independent. If they were independent, we would find the number of parts to be:
\begin{align*}
&\left( \frac{1}{2} + o(1)\right) \log_2 \left( \frac{1}{1-\left(1-\frac{(q^3+q^2-q)^m}{(q^4)^m}\right)} \right) \frac{|\mathcal{S}|}{\log_2(|\mathcal{S}|)} \\
= &\left( \frac{1}{2} + o(1)\right) \log_2 \left( \frac{(q^4)^m}{(q^3+q^2-q)^m} \right) \frac{|\mathcal{S}|}{\log_2(|\mathcal{S}|)}\\
= &\left( \frac{1}{2} + o(1)\right) m\log_2 \left( \frac{q^4}{q^3+q^2-q} \right) \frac{|\mathcal{S}|}{\log_2(|\mathcal{S}|)} \\
= &\left( \frac{1}{2} + o(1)\right) m \left( \log_2(q) - o(1) \right) \frac{|\mathcal{S}|}{\log_2(|\mathcal{S}|)}.
\end{align*}

Without the independence of these edge probabilities, however, it is difficult to give a concrete lower bound on the number of colors we should expect using single-qudit Clifford operators.
\end{proof}

Running with our conjectures, however, we compare this to Conjecture~\ref{conj:chromatic} and observe that the expected number of parts is roughly $\mathcal{O}(m)$ times more using bitwise commutation rather than general commutation. Since the greedy algorithm considers a partition requirement which is even stronger than bitwise commutation, graph coloring algorithms would greatly improve upon existing partitioning techniques.

We have seen these conjectured results borne out on small examples. In particular, we have used a greedy graph coloring algorithm to partition qubit Pauli operators and have compared the number of parts given by each gate set for some small examples. In the table below, the length of the Pauli operators is shown alongside the ratio of the number of parts given by the gate set, $sg\mathcal{C}$, versus the gate set, $\mathcal{C}$. The larger this ratio is, the larger the improvement we obtain from expanding our gate set.

\begin{center}
\begin{tabular}{|c|c|}
\hline
\textbf{Length} & \textbf{Ratio} \\ \hline
1 & 1 \\ \hline
2 & 1.483 \\ \hline
3 & 1.575 \\ \hline
4 & 1.930 \\ \hline
5 & 2.389 \\ \hline
6 & 2.948 \\ \hline
7 & 3.178 \\ \hline
8 & 3.709 \\ \hline
9 & 4.119 \\ \hline
10 & 4.793 \\ \hline
11 & 5.278 \\ \hline
12 & 5.793 \\ \hline
13 & 6.376 \\ \hline
14 & 6.906 \\ \hline
15 & 7.501 \\ \hline
16 & 8.043 \\ \hline
17 & 8.545 \\ \hline
18 & 9.054 \\ \hline
19 & 9.529 \\ \hline
\end{tabular}
\end{center}

On these small examples, which were obtained by averaging over sample sizes of only 5 sets, the increase in the ratio grows with respect to $\mathcal{O}(length)$. While this in no way proves the above conjectures, it gives some evidence that the improvements gained by expanding our gate set is non-negligible.

\section{Measurement in the Variational Quantum Eigensolver}\label{sec:4}

The variational quantum eigensolver is a quantum-classical hybrid algorithm used for finding the ground state energy of a molecule. With applications ranging from quantum chemistry to combinatorial optimization and a low cost in quantum resources, the variational quantum eigensolver is a great candidate for near-term applications of quantum computers.

\begin{definition}
The \textbf{variational quantum eigensolver} solves the following problem:
\begin{addmargin}[2.2em]{0em}
\begin{itemize}
\item[Given:] $H = \sum_k c_kP_k$, a Hamiltonian written as a sum over Pauli operators
\item[Goal:] approximate the smallest eigenvalue, $\lambda$, of $H$.
\end{itemize}
\end{addmargin}
\end{definition}

To begin our analysis, we see that, if we were able to initialize the eigenstate, $\ket{\psi}$, such that $H\ket{\psi} = \lambda \ket{\psi}$, and if we could measure $\bra{\psi}H\ket{\psi} = \bra{\psi}\lambda\ket{\psi} = \lambda$, then we could complete our goal. However, there are two problems with this:
\begin{enumerate}
\item How do we produce $\ket{\psi}$?
\item How do we measure $\bra{\psi}H\ket{\psi}$?
\end{enumerate}

The first of the above problems is dealt with by a classical optimization algorithm. Preparing an initial state, $\ket{\psi_0}$, and measuring $\bra{\psi_0}H\ket{\psi_0}$, we plug our results into a classical optimizer which returns parameters for a new state, $\ket{\psi_1}$. Continuing this process, we improve our state until we construct some $\tilde{\ket{\psi}}$ such that $|\tilde{\bra{\psi}}H\tilde{\ket{\psi}} - \lambda| \leq \varepsilon$, for some desired precision.

The second problem, however, seemingly requires us to simulate our Hamiltonian, $H$, which may be very difficult. To get around this, we observe that expectation values are linear, so we have:
\[
\bra{\psi}H\ket{\psi} = \left\langle\psi\left|\sum c_k P_k \right|\psi\right\rangle = \sum c_k \bra{\psi} P_k \ket{\psi}.
\]
Measuring $\bra{\psi} P_k \ket{\psi}$ is as simple as transforming $P_k$ into a diagonal Pauli operator and measuring in the computational basis. Therefore, we are able to accomplish this step efficiently.

While the algorithm we have outlined above looks feasible at first glance, it is still not practical for many near-term quantum devices. The first issue is the length of these Pauli operators (which determines the number of qubits in the system). These scale proportionally to the size of the molecules being considered, and this quickly puts many interesting molecules out of the reach of near-term quantum devices.

The second issue, however, is the sheer number of runs required to make the algorithm work. Considering we require this many measurements (and this many initializations of $\ket{\psi_i}$), in each quantum step of our algorithm, it is important to try to cut down on the number of measurements required. This was our motivation for the previous discussion on partitioning Pauli operators.

We used a greedy graph coloring algorithm to partition the Pauli operators of the Hamiltonians for various molecules. We did so with respect to both the generalized Clifford group and the set of single-qubit Clifford operators. We summarize our results in the table below.

\begin{center}
\begin{tabular}{|c|c|c|c|c|}
\hline
\textbf{Molecule} & \textbf{Length} & \textbf{\# Paulis} & $\bm{sq\mathcal{C}}$ & $\bm{\mathcal{C}}$ \\ \hline
H$_2$ & 2 & 11 & 9 & 5 \\ \hline
H$_2$ & 4 & 15 & 3 & 2 \\ \hline
LiH & 4 & 27 & 8 & 4 \\ \hline
LiH & 6 & 118 & 36 & 13 \\ \hline
H$_2$O & 8 & 197 & 42 & 15 \\ \hline
phenyl group & 32 & 20481 & 7826 & 270 \\ \hline
\end{tabular}
\end{center}

As we see, the improvements afforded by changing our gate set allow for significantly fewer preparations and measurements of the expectation value. This improvement allows for the simulation of larger Hamiltonians in near term implementations, provided the fidelity of entanglement gates can be ensured.

\section{Conclusion and Open Problems}\label{sec:5}

In this paper, we have covered a lot of ground with regard to partitioning Pauli operators. In particular, we have shown that minimal partitions of $\pauli{q}[n]$ always exist, leading to a proof of the existence of a set of $q^n+1$ mutually unbiased bases in $\mathbb{C}^{q^n}$. Moreover, we have shown that the problem of partitioning Pauli operators is equivalent in complexity to the problem of coloring graphs, and that implementing graph coloring algorithms should significantly reduce the number of measurements required in each quantum step of the variational quantum eigensolver. We provide some launching off points for future research for those readers interesting.

First, another approach to speeding up the variational quantum eigensolver would be to find a similar matrix with a more simple expression as a sum over Pauli operators. For instance, if one could efficiently calculate the diagonalization of $H$ (i.e. write a matrix similar to $H$ as a sum over all diagonal Pauli matrices), then all measurements could be performed simultaneously, saving a significant amount of time. This, or similar approaches, would allow for significantly faster and more reliable computations. The variational quantum state diagonalization \cite{VQSD} algorithm does exactly this, but it already requires more quantum resources than the variational quantum eigensolver. Perhaps a hybrid algorithm which mostly diagonalizes $H$ before applying the variational quantum eigensolver could require fewer quantum resources.

Second, while we showed the equivalence between partitioning Pauli operators and coloring graphs, we did not make mention of specific graph coloring algorithms. This is because, as was mentioned many a time through Sections	~\ref{sec:2} and~\ref{sec:3}, the graphs are not truly random. In fact, the set of Pauli operators, itself, is not random, since it is the output of an algorithm for mapping electronic states onto qudits (e.g. the Bravyi-Kitaev transformation, or perhaps the Jordan-Wigner transformation). With more insight into the structure of the molecule, the outputs of these transformations, or the relations between the Pauli terms, a specific coloring algorithm could be chosen or designed for each application. In special cases, one could imagine that optimal colorings could be calculated efficiently.


\appendix
\addcontentsline{toc}{chapter}{Appendices}
\section{Diagonalization Algorithm}\label{app:A}

We define the operators which act by conjugation on $q$-ary Pauli gates as follows \cite{Clifford}:
\begin{align*}
F_q :\ &X_q \mapsto Z_q \\
& Z_q \mapsto X_q^{-1} \\
R_q :\ & X_q \mapsto X_q Z_q \\
& Z_q \mapsto Z_q \\
\SUM_q :\ & \identity{q} X_q \mapsto \identity{q} X_q \\
& X_q \identity{q} \mapsto X_q X_q \\
& \identity{q} Z_q \mapsto Z_q^{-1} Z_q \\
& Z_q \identity{q} \mapsto Z_q \identity{q}.
\end{align*}
We shall use without proof the fact that these gates generate the generalized Clifford group, the set of operators which permutes the elements of $\pauli{q}$. We shall give an inductive proof which can be used as an iterative algorithm to construct the specific Clifford gate which simultaneously diagonalizes a set of commuting Pauli operators.

\begin{lemma}
Let $\mathcal{S}$ be a set of commuting Pauli operators. There exists a Clifford gate, $G$, such that $G \mathcal{S} G^\dagger = \{G P G^\dagger : P \in \mathcal{S}\} \subseteq \left\{ \bigotimes_{i=1}^{n} X_q^0 Z_q^{j_i} : j_i \in \Zmod{q} \right\}$.
\end{lemma}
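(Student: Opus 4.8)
The plan is to induct on the number of qudits, $n$, exhibiting at each stage an explicit Clifford operator (built from $F_q$, $R_q$, and $\SUM_q$) that strips the $X$-part off the first qudit, thereby decoupling it and reducing to a commuting set on $n-1$ qudits. Throughout I would identify a Pauli $\bigotimes_i X_q^{a_i} Z_q^{b_i}$ with its symplectic vector $(\vec{a} \mid \vec{b}) \in \GF{q}^{2n}$ and use the fact that conjugation by a Clifford acts $\GF{q}$-linearly and symplectically on these vectors, with commutation of operators corresponding to vanishing of the symplectic inner product $\odot$. Since we ignore phases, the target set $\left\{\bigotimes_i X_q^0 Z_q^{j_i}\right\}$ is precisely the set of vectors with zero $X$-part.

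For the base case $n=1$, if every operator already lies in $\{Z_q^j\}$ there is nothing to do; otherwise some $g = X_q^a Z_q^b$ has $a \neq 0$. Because $q$ is prime, the symplectic form on $\GF{q}^2$ is nondegenerate and alternating, so the centralizer of $g$ is exactly $\langle g \rangle$; as $\mathcal{S}$ commutes, $\mathcal{S} \subseteq \langle g \rangle$. It then suffices to send $g$ to $Z_q$, which is possible because $F_q$ and $R_q$ generate the single-qudit Clifford group and its symplectic action $\textrm{SL}_2(\GF{q})$ is transitive on nonzero vectors; concretely, $R_q^{-ba^{-1}}$ sends $g$ to $X_q^{a}$ and a final $F_q$ converts this into a pure clock operator. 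The whole cyclic group $\langle g \rangle$ is thereby carried into $\{Z_q^j\}$.

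For the inductive step I would inspect the first qudit. If $a_1 = 0$ for every operator in $\mathcal{S}$, then each factors as $Z_q^{b_1} \otimes P'$ with $P'$ acting on qudits $2,\dots,n$; the $P'$ pairwise commute (the diagonal $Z_q^{b_1}$ factors contribute nothing to the commutator), so by induction a Clifford $\identity{q} \otimes G'$ diagonalizes them while leaving the already-diagonal first qudit untouched. Otherwise, choose $g \in \mathcal{S}$ with $a_1 \neq 0$. Using $R_q$ on qudit $1$ I first clear its $Z$-power to make the first-qudit part $X_q^{a_1}$, then use $\SUM_q$ gates controlled on qudit $1$ to cancel, qudit by qudit, the $X$-powers of $g$ elsewhere, and a symmetric pass (conjugating by $F_q$ where needed) to clear its residual $Z$-powers, finally reducing $g$ to $Z_q$ on qudit $1$ and the identity elsewhere. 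The cleanest justification that such a sequence exists is that $F_q, R_q, \SUM_q$ generate the Clifford group, whose symplectic action is the full symplectic group and hence transitive on nonzero vectors. The payoff is that, once $g$ has been sent to $Z_q$ on the first qudit, every other $h \in \mathcal{S}$ still commutes with it, and $h \odot Z_q = a_1(h)$ forces $a_1(h) = 0$; after this single conjugation we are back in the first case, and induction finishes the argument.

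The main obstacle is the middle of the inductive step: giving an honest, terminating sequence of $\SUM_q$ and single-qudit gates reducing $g$ to $Z_q$ on the first qudit, since $\SUM_q$ propagates $Z$ from target back to control and so the $X$-clearing and $Z$-clearing passes interact. I would handle this by clearing all off-qudit $X$-powers first with controlled $\SUM_q$ gates from qudit $1$, then conjugating by $F_q$ on the relevant qudits to turn leftover $Z$-powers into $X$-powers and repeating, tracking the symplectic vector to confirm that each pass strictly reduces the support; alternatively, I would simply invoke the transitivity of the symplectic group on nonzero vectors that follows from the stated generation of the Clifford group by $F_q, R_q, \SUM_q$. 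Everything else—the commutation bookkeeping and the phase-independence of the target—is routine.
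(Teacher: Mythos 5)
Your proof is correct, and its skeleton is the same as the paper's: induct on $n$, reduce a single operator carrying a nonzero $X$-power to $Z_q^{a}\otimes\identity{q^{n-1}}$ by an explicit word in $R_q$, $F_q$, $\SUM_q$, observe that preservation of commutation under Clifford conjugation forces every other conjugated operator to have zero $X$-power on the first qudit, and apply the inductive hypothesis to the last $n-1$ qudits. The genuine difference is the gate sequence used for the single-operator reduction, and yours is the sounder of the two. The paper diagonalizes the first qudit immediately (leaving $Z_q^{a}$ there) and then conjugates by $\SUM_q$ gates from the remaining qudits into the first to cancel their $Z$-powers (its $H_3$); but when a remaining qudit carries both a nonzero $X$-power and a nonzero $Z$-power, this conjugation propagates an $X$-power back onto the first qudit (e.g. for $q=2$, $Z\otimes XZ \mapsto XZ\otimes X$), so the paper's intermediate claim that the operator stays of the form $Z_q^{a}\otimes P_3$ fails as written and needs repair. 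Your ordering — keep the nonzero $X$-power on qudit $1$, use it as the $\SUM_q$ control to clear the $X$-powers elsewhere (whose only side effect lands in qudit $1$'s $Z$-power), convert leftover $Z$-powers into $X$-powers with $F_q$, clear once more, and only then fix qudit $1$ with $R_q$ and $F_q$ — resolves exactly the cross-talk you flagged: after the first pass the other qudits are pure $Z$, so the second pass has no side effects and the procedure terminates. Your base case via the centralizer of a nonzero vector of $\GF{q}^{2}$, and your case split on whether any operator has nonzero $X$-power on qudit $1$ (which makes the paper's $H_1$ step unnecessary), are equivalent to the paper's argument but stated more cleanly. The one thing to drop is the fallback appeal to transitivity of the symplectic group on nonzero vectors: that transitivity does not follow formally from the mere statement that $F_q, R_q, \SUM_q$ generate the Clifford group — establishing it amounts to exactly the kind of explicit reduction being asked for — so your two-pass construction, not the black-box invocation, is the argument to keep.
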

\begin{proof}
We shall prove this by induction on the length of the Pauli operators, $n$.
\begin{itemize}
\item $n=1$: If $\mathcal{S} \subseteq \pauli{q}$ is not diagonalized, then there exists some operator, $P \in \mathcal{S}$, with a non-zero $X$-component. We shall write $P = X_q^a Z_q^b$ for some $a \not\equiv 0 \mod{q}$. Let:
\[
G_1 = R_q^{a^{-1}(q-b)}
\]
where $a^{-1}$ is the multiplicative inverse of $a$ over $\Zmod{q}$.

By our conjugation rules above, we observe that:
\begin{align*}
G_1 P G_1^\dagger &= X_q^a Z_q^b Z_q^{a(a^{-1}(q-b))} \\
&= X_q^a Z_q^b Z_q^{q-b} \\
&= X_q^a Z_q^q \\
&= X_q^a.
\end{align*}
We have successfully removed the $Z$-component of this Pauli operator, but we want to remove the $X$-component, so we let $G = F_q G_1$. Again, following our conjugation rules, we observe that:
\begin{align*}
G P G^\dagger &= F_q G_1 P G_1^\dagger F_q^\dagger \\
&= F_q X_q^a F_q^\dagger \\
&= Z_q^a.
\end{align*}
Thus, we have successfully diagonalized a single element of $\mathcal{S}$. However, since conjugation by Clifford gates preserves commutation relations, we know that $G \mathcal{S} G^\dagger$ must still pairwise commute. Since there exists an operator with no $X$-component on the one and only qudit, all operators in $G \mathcal{S} G^\dagger$ must have no $X$-component. Thus, we have successfully diagonalized the set.

\item $n > 1$: Assume we are able to successfully diagonalize any commuting set of Pauli operators on $n-1$ qudits. If $\mathcal{S} \subseteq \pauli{q}[n]$ is not already diagonalized, there exists some operator, $P \in \mathcal{S}$, with a nonzero $X$-component. Let $H_1$ be a $\SUM_q$ gate from a qudit with a nonzero $X$-component to the first qudit in our tensor product, we may now assume that $H_1 P H_1^\dagger = X_q^a Z_q^b \otimes P_1$ for some $a \not\equiv 0 \mod{q}$ and some $P' \in \pauli{q}[n-1]$.

We take advantage of our above proof and left-multiply our $H_1$ by the gate which acts only on the first qudit and diagonalizes it as in our base case (let's call this gate $H_2$). This leaves us with: $H_2 H_1 P H_1^\dagger H_2^\dagger = Z_q^a \otimes P_2$.

Next, $P_2$ may have some qudits with a nonzero $Z$-component. Let $H_3$ be a series of $\SUM_q$ gates from these qudits to the first qudit, applying the gate as many times as is necessary to cancel out the $Z$-component. This leaves us with: $H_3 H_2 H_1 P H_1^\dagger H_2^\dagger H_3^\dagger = Z_q^a \otimes P_3$.

$P_3$ has no $Z$-component, but my have some qudits with a nonzero $X$-component. We address this by letting $H_4$ be the $(n-1)$-fold $F_q$ gate applied to the last $n-1$ qudits, leaving us with: $H_4 H_3 H_2 H_1 P H_1^\dagger H_2^\dagger H_3^\dagger H_4^\dagger = Z_q^a \otimes P_4$.

$P_4$ has no $X$-component, but may have some qudits with nonzero $Z$-component. Using the same strategy as in constructing $H_3$, we use $\SUM_q$ gates to cancel out this $Z$-component, leaving us with: $H_5 H_4 H_3 H_2 H_1 P H_1^\dagger H_2^\dagger H_3^\dagger H_4^\dagger H_5^\dagger = Z_q^a \otimes \identity{q^{n-1}}$.

Letting $G_1$ denote this product of Clifford gates we have constructed, we observe that every operator in $G_1 \mathcal{S} G_1^\dagger$ must commute with $Z_q^a \otimes \identity{q^{n-1}}$. This means that every operator in $G_1 \mathcal{S} G_1^\dagger$ must have no $X$-component in the first qudit.

Restricting ourselves to Clifford group operations on the last $n-1$ qudits, we already know by our inductive hypothesis that there exists a Clifford gate, $G_2$, which simultaneously diagonalizes the remaining $n-1$ qudits of $G_1 \mathcal{S} G_1^\dagger$, without affecting the first qudit.

In conclusion, every operator in $G_2 G_1 \mathcal{S} G_1^\dagger G_2$ has no $X$-component on every qudit. Thus, $G = G_2 G_1$ is a Clifford gate which diagonalizes $\mathcal{S}$.
\end{itemize}
\end{proof}

\phantomsection  
\renewcommand*{\bibname}{References}

\addcontentsline{toc}{section}{\textbf{References}}

\bibliographystyle{unsrtnat}
\bibliography{Bibliography}

\nocite{*}

\end{document}